\documentclass[aps,pre,showpacs,
amssymb,twocolumn
]{revtex4-2}
\usepackage{graphicx}
\usepackage{amsmath}
\usepackage{amsfonts}
\usepackage{amssymb}
\usepackage{epsfig,libertine}
\usepackage[libertine]{newtxmath}
\usepackage{color}
\usepackage{dsfont, hyperref, xcolor}
\usepackage{comment}
\usepackage{enumerate}

\usepackage{amsthm} 

\newcommand{\abs}[1]{\left\vert#1\right\vert}

\newcommand{\Tr}[1]{\text{Tr}\left\{#1\right\}}
\newcommand{\ParTr}[2]{\text{Tr}_{#1}\left\{#2\right\}}
\newcommand{\bra}[1]{\langle#1\vert}
\newcommand{\ket}[1]{\vert#1\rangle}
\newcommand\braket[2]{\langle#1|#2\rangle}

\newtheorem{theorem}{Theorem}
\newtheorem{corollary}{Corollary}
\newtheorem{lemma}{Lemma}

\begin{document}

\title{Role of quantum correlations in daemonic expected utility}

\author{Gianluca~Francica and Luca Dell'Anna}
\address{Dipartimento di Fisica e Astronomia e Sezione INFN, Università di Padova, via Marzolo 8, 35131 Padova, Italy}

\date{\today}

\begin{abstract}
Fluctuations can challenge the possibility of improving work extraction from quantum correlations. This uncertainty in the work extraction process
can be addressed resorting to the expected utility hypothesis which can provide an optimal method for work extraction.
We study a bipartite quantum system and examine the role of quantum correlations in a daemonic work extraction performed by certain local operations and classical communication.
Specifically, we demonstrate and explain how, depending on the so-called absolute risk aversion, a non-neutral risk agent, influenced by fluctuations, views quantum correlations differently from a neutral risk agent who is affected solely by the average work.

\end{abstract}

\maketitle

\section{Introduction}

The role of correlations among the parties of a quantum system in thermodynamics has been recently investigated focusing on work extraction from finite systems~\cite{acin15,banik16,Alimuddin19,Francica17,mauro19,Touil22,Francicacorr22,Puliyil22,Shi22,Imai23}.
In particular, the daemonic ergotropy~\cite{Francica17}, which can be defined as the maximum average work locally extractable from a bipartite quantum system by performing certain local operations and classical communication, gives a gain in work extraction if there are quantum correlations.
Beyond its standard definition, generalized measurements and multipartite extensions have been also discussed~\cite{mauro19}. Among its different uses, e.g., in continuously-monitored open quantum batteries~\cite{Morrone23}, daemonic ergotropy also found  applications in order to investigate the role of indefinite causal order structures in thermodynamics.
In particular, when the communication is not classical and it is achieved with a process matrix~\cite{brukner12}, the role of indefinite causal order has been investigated with the help of daemonic ergotropy~\cite{Francicagames22}. Furthermore, by considering a quantum switch~\cite{Chiribella13},
the activation of states by applying maps in an indefinite causal order~\cite{kyrylo22,kyrylo23} and the effects of non-Markovianity~\cite{Cheong23}  have been also investigated.
Here, we wonder whether and when there are advantages in using the daemonic ergotropy protocol of Ref.~\cite{Francica17} if we take in account fluctuations by means of a utility function, i.e., by using the expected utility hypothesis, first formalized by von Neumann and Morgenstern within the theory of games and economic behaviour eighty years ago~\cite{vonNeumann}. 
Recently, expected utility has been also related to fluctuation theorems~\cite{Ducuara23,Francicauti23}. In particular, in Ref.~\cite{Francicauti23} we investigated a possible relation with an entropy coming from a fluctuation theorem, which in certain cases can be a guideline for making a choice.
Concerning the work fluctuations, when the initial state is not incoherent with respect to the energy basis, there may not be a probability distribution for the work done, as proven by a no-go theorem~\cite{Perarnau-Llobet17} due to the existence of quantum contextuality~\cite{Lostaglio18}. Thus, we can adopt the quasiprobability distribution of work introduced in Ref.~\cite{Francica22}, which is selected if some fundamental conditions need to be satisfied~\cite{Francica222}.
In work extraction from thermally isolated quantum system the optimal expected utility has been introduced in Ref.~\cite{Francica24} in order to optimize the work extraction by taking into account also the fluctuations, which can be dominant in finite systems.
Here, we study the gain achieved from quantum correlations when fluctuations are taken into account. After a brief introduction to some preliminary notions in Sec.~\ref{sec.preli}, we examine the expected utility for the work extraction protocol in Sec.~\ref{sec.daemouti}. In particular, we aim to clarify the role of quantum correlations when the fluctuations are taken in account with a utility function. Finally, in Sec.~\ref{sec.conclusions} we summarize and further discuss the obtained results.

\section{Preliminaries}\label{sec.preli}
We start our discussion by introducing some preliminary notions, which are the protocol of work extraction (see Sec.~\ref{sec.ergotropy}), some rudiments about quantum correlations, i.e., the quantum discord and the entanglement (see Sec.~\ref{sec.entanglement}), the  expected utility hypothesis (see Sec.~\ref{sec.utility}) and the quasiprobability (see Sec.~\ref{sec.quasiprobability}).
\subsection{Work extraction}\label{sec.ergotropy}
We consider a bipartite system having Hilbert space $\mathcal H = \mathcal H_S \otimes \mathcal H_A$, where $\mathcal H_S$ is the Hilbert space of a system $S$, where we can perform unitary transformations, and $\mathcal H_A$ is the one of a system (a so-called ancilla) $A$, where we can perform projective measurements. The two subsystems $S$ and $A$ are not interacting but they are prepared in a state $\rho_{SA}$ that can show correlations among the parties. A `daemonic' protocol can be realized through local operations and classical communication. In this section, we are interested to extract the optimal work locally from $S$ by neglecting the fluctuations and focusing only on the average extracted work, following Ref.~\cite{Francica17}.
We consider an Hilbert $\mathcal H_S$ with dimension $d_S$ and the Hamiltonian of the system $S$ is
\begin{equation}\label{eq.hami}
H_S=\sum_{k} \epsilon_k \ket{\epsilon_k}\bra{\epsilon_k}\,,
\end{equation}
with $\epsilon_{k}< \epsilon_{k+1}$. The reduced state of the system $S$ is given by
\begin{equation}
\rho_S = \ParTr{A}{\rho_{SA}} = \sum_k r_k \ket{r_k} \bra{r_k}\,,
\end{equation}
with $r_{k}\geq r_{k+1}$.
The system $S$ is thermally isolated, and an amount of average work is locally extracted by  cyclically changing some Hamiltonian parameters of $S$, so that at the end of the cycle the final Hamiltonian is equal to the initial one, and they are both equal to $H_S$ in Eq.~\eqref{eq.hami}. It results a unitary cycle with a local unitary time-evolution operator $U_S=\mathcal T e^{-i \int_0^\tau H_S(t)dt}$ for $S$, generated by the time-dependent Hamiltonian $H_S(t)$ such that $H_S(0)=H_S(\tau)=H_S$, where $t=0$ and $t=\tau$ are the initial and final time, and $\mathcal T$ is the time ordering operator. The final reduced state of $S$ is $U_S\rho_S U_S^\dagger$ and the average work is minus the change of average energy and reads
\begin{equation}\label{eq.workave}
W(\rho_S,U_S)=E(\rho_S)-E(U_S \rho_S U_S^\dagger)\,,
\end{equation}
where the average energy of the initial and final state is calculated with respect to the Hamiltonian $H_S$, and we have defined $E(\rho_S)=\Tr{H_S\rho_S}$. The optimal work extraction is achieved by maximizing the average work in Eq.~\eqref{eq.workave} over all the unitary cycles, i.e.,
\begin{equation}\label{eq.ergotropy def}
\mathcal E (\rho_S) = \max_{U_S} W(\rho_S,U_S) \geq 0\,.
\end{equation}
The optimal value $\mathcal E (\rho_S)$ is called ergotropy~\cite{Allahverdyan04}, and it is achieved by performing an optimal unitary cycle with unitary operator $U_S = \sum_{k=1}^d e^{i\phi_k} \ket{\epsilon_k}\bra{r_k}$, so that the ergotropy reads
\begin{equation}
\mathcal E (\rho_S) = \sum_{k,j} r_j\left( \abs{\braket{\epsilon_k}{r_j}}^2-\delta_{j,k}\right)\epsilon_k\,.
\end{equation}
We note that the ergotropy $\mathcal E(\rho_S)$ is zero if and only if the initial state $\rho_S$ is passive, i.e., commutates with the Hamiltonian, $[\rho_S,H_S]=0$, and the populations with respect to the energy basis are sorted in decreasing order, $r_k=\bra{\epsilon_k} \rho_S\ket{\epsilon_k}$. 

A larger amount of work can be extracted through the following daemonic protocol, which exploits the information about the state of $S$ that is obtained by performing projective measurements on $A$. These measurements are described by the set of projectors $\{\Pi^A_a=\ket{a}\bra{a}\}$ with $\braket{a}{a'}=\delta_{a,a'}$, where $a=1,\ldots,d_A$ and $d_A$ is the dimension of the Hilbert space $\mathcal H_A$.
By performing a projective measurement with projector $\Pi^A_a$ on the state of the ancilla $A$, the state of the system $S$ collapses into the state
\begin{equation}
\rho_{S|a} = \frac{\ParTr{A}{I^S\otimes \Pi^A_a \rho_{SA} I^S\otimes \Pi^A_a}}{p_a}\,,
\end{equation}
with probability $p_a=\Tr{I^S\otimes \Pi^A_a \rho_{SA}}$. Through classical communication, we can perform unitary cycles conditioned by the outcomes $a$ of the measurements, extracting the maximum average work $\mathcal E_{\{\Pi^A_a\}}(\rho_{SA})$ that is called daemonic ergotropy~\cite{Francica17} and reads
\begin{equation}
\mathcal E_{\{\Pi^A_a\}}(\rho_{SA}) = \sum_a p_a \mathcal E (\rho_{S|a})\,.
\end{equation}
We note that if the information gained from the measurements on $A$ is not exploited, the cycles are not conditioned by $a$ and the maximum average work extracted remains equal to the ergotropy $\mathcal E(\rho_S)$.
Thus, the maximum gain obtained by exploiting the information acquired reads
\begin{equation}\label{eq.gainergo}
\delta \mathcal E(\rho_{SA}) = \max_{\{\Pi^A_a\}} \mathcal E_{\{\Pi^A_a\}}(\rho_{SA}) - \mathcal E (\rho_S)\,,
\end{equation}
and it is related to the presence of quantum correlations~\cite{Francica17}.
\subsection{Quantum correlations}\label{sec.entanglement}
Quantum correlations will play a crucial role in our discussion. For our purposes, to identify them we define the set of classical-quantum states
\begin{equation}
\mathcal C_S = \left\{\rho_{SA} \,|\, \rho_{SA}= \sum_{k} p_{k} P^S_k \otimes \rho^A_k\right\}
\end{equation}
and the set of separable states
\begin{equation}
\mathcal S = \left\{\rho_{SA}\, |\, \rho_{SA}= \sum_{k} p_{k} \rho^S_k \otimes \rho^A_k\right\}\,,
\end{equation}
so that $\mathcal C_S \subseteq \mathcal S$, where $p_{k}\geq 0$ and sum to one, $P^S_k$'s are rank one projectors, $P^S_k P^S_j = \delta_{k,j}P^S_k$ and $\sum_{k} P^S_k = I^S$, and $\rho^S_k$ and $\rho^A_k$ are density matrices. We recall that all the separable states $\rho_{SA}\in \mathcal S$ can be prepared from a product state $\rho_S\otimes \rho_A$ by performing local operations and classical communication. On the other hand, states $\rho_{SA}\notin \mathcal S$ are called entangled (see, e.g., Ref.~\cite{Horodecki09} for a review). Correlations can be quantified through distance based measures (see, e.g., Ref.~\cite{Modi10}). For instance, the entanglement in a state $\rho_{SA}$ can be quantified through the relative entropy of entanglement~\cite{Vedral97}, $E_{re}(\rho_{SA})=\min_{\sigma_{SA}\in \mathcal S} S(\rho_{SA}||\sigma_{SA})$, where we have defined the quantum relative entropy $S(\rho||\eta)=\Tr{\rho(\log_2 \rho -\log_2 \eta)}$. 
Although a state $\rho_{SA}$ is separable, it can still show quantum features (quantum discord), if $\rho_{SA}\notin \mathcal C_S$. Originally, quantum discord has been introduced as the difference between the mutual information and the maximum one way classical information~\cite{Henderson01,Ollivier01}. A measure based on quantum relative entropy reads $D_{re}^{A|S}(\rho_{SA})=\min_{\chi^{c-q}_{SA}\in \mathcal C_S} S(\rho_{SA}||\chi^{c-q}_{SA})$. In particular, given a classical-quantum state $\chi^{c-q}_{SA}\in \mathcal C_S$ there exists a set  $\{P^S_k\}$ such that the state remains unperturbed if we perform these projective measurements, i.e., $\sum_k P^S_k\otimes I^A \chi^{c-q}_{SA}P^S_k\otimes I^A =\chi^{c-q}_{SA}$. 

\subsection{Expected utility hypothesis}\label{sec.utility}
Here, we aim to investigate the work extraction performed by an agent non-neutral to risk, which takes in account also the fluctuations and not only the average work.
For our purposes, we focus on an agent who must choose between two procedures that yield two different values of extracted work represented by the random variables $w_1$ and $w_2$.
To give an example, we consider an agent who must choose between extracting a certain work $w_{det}=50$ or flipping a coin and extracting a work $w_{head}=100$ if heads or nothing otherwise. If the agent is risk neutral, he is indifferent to the choice, since if he flips the coin he will extract the average work $w_{det}$. An agent non-neutral to risk will choose the certain work $w_{det}$ or to flip the coin depending on his risk aversion (e.g., an agent that is averse to risk, tends to choose the deterministic work extraction of the amount $w_{det}$, preferring situations with small fluctuations).
The risk aversion of the agent can be fully characterized by using a utility function $u(w)$, which quantifies the satisfaction gained from a choice, so that the agent will choose the procedure yielding the work $w_1$ instead of $w_2$ if
\begin{equation}\label{eq.exp utility theo}
\langle u(w_1) \rangle > \langle u(w_2) \rangle\,.
\end{equation}
It is easy to see that the inequality in Eq.~\eqref{eq.exp utility theo} remains unchanged if we perform an affine transformation on the utility function, i.e., the transformation $u(w) \mapsto a u(w) + b$, where $a$ is a positive variable. This means that the utility function is defined up to affine transformations, since two utility functions related by such transformation gives the same preference ordering given by Eq.~\eqref{eq.exp utility theo}. From Eq.~\eqref{eq.exp utility theo}, any linear utility function $u(w)=a w +b$, with $a>0$,  gives the condition $\langle w_1\rangle > \langle w_2 \rangle$, thus this utility function characterizes an agent neutral to risk. To characterize the risk aversion we can focus on a strictly increasing utility function that is concave. In our example, from Jensen's inequality, for arbitrary values of $w_{det}$ and $w_{head}$, the agent chooses the certain work $w_{det}$ instead to flip the coin if $w_{det}>w_{head}/2$. Thus, this suggests that the risk aversion is related to the concavity of the utility function, and we recall that in general it can be measured with the Arrow-Pratt coefficient of absolute risk aversion defined as
\begin{equation}\label{eq. RA}
r_A(w) = -\frac{u''(w)}{u'(w)}\,,
\end{equation}
which is non-negative for a utility function that is concave and strictly increasing. We note that, while $u''(w)$ quantifies the concavity, dividing by $u'(w)$ guarantees that the measure in Eq.~\eqref{eq. RA} is invariant under affine transformations. More details can be found, e.g., in Refs.~\cite{bookmicroeco,bookmicroeco2}.

\subsection{Quasiprobability}\label{sec.quasiprobability}
In general, we can get an extracted work $w$ with a quasiprobability distribution $p(w)$, such that $\int p(w) dw= 1$ and $p(w)$ is negative for some $w$. Of course, in this case results related to the risk aversion do not apply due to the negativity of $p(w)$. For instance, given a concave function $u(w)$, we get the Jensen's inequality $\langle u(w) \rangle \leq u(\langle w \rangle)$ if $p(w)\geq 0$, where the average is given by $\langle u(w) \rangle = \int u(w) p(w) dw$. However, the inequality can be not satisfied, i.e., we can get $\langle u(w) \rangle > u(\langle w \rangle)$, if $p(w)$ takes also negative values, suggesting that the implications of negativeness deserve a separate study. For our purposes, a utility function $u(w)$ defines an ordering of the quasiprobability distributions, i.e., given two quasiprobability distributions $p_1(w_1)$ and $p_2(w_2)$, we define $p_1(w_1)\succ p_2(w_2)$ if and only if $\langle u(w_1) \rangle > \langle u(w_2) \rangle$, i.e., Eq.~\eqref{eq.exp utility theo} holds, where we defined $\langle u(w_i) \rangle = \int u(w_i) p_i(w_i) dw_i$, with $i=1,2$. Defining an ordering is necessary to perform an optimization similar to that done in Eq.~\eqref{eq.ergotropy def}. Thus, given a state $\rho_S$, we consider the set $\mathcal A$ formed by the quasiprobability distributions of work corresponding to the different realizations of the time-evolution described by the unitary operators $U_S$. Then, from the ordering relation $p_1(w_1)\succ p_2(w_2)$, we can perform an optimization of the work extraction, i.e., we can find the sup of the set $\mathcal A$. In particular, if $u(w)=w$ we achieve Eq.~\eqref{eq.ergotropy def}, since $\langle w \rangle = W(\rho_S,U_S)$. However, $u(w)$ can be non-linear and thus higher moments are involved in the optimization.
In general, for a state $\rho_S$ and a time-evolution unitary operator $U_S$, there can be different quasiprobability representations of the work, forming a set $\mathcal C \subseteq \mathcal A$, which depends on $\rho_S$ and $U_S$. In this case the ordering of the processes can depend on the specific choice of the quasiprobability in $\mathcal C$, i.e., we can have $p_1(w_1)\succ p_2(w_2)$ and $p'_2(w_2)\succ p'_1(w_1)$ for some $p_i(w_i),p'_i(w_i)\in \mathcal C_i$, although all the quasiprobability distributions in $\mathcal C_i$ represent the same process, with $i=1,2$. Thus, we can optimize the work extraction by fixing a quasiprobability in $\mathcal C$ or by defining an ordering relation $\mathcal C_1 \succ_{\mathcal C} \mathcal C_2$. For instance, we can define $\mathcal C_1 \succ_{\mathcal C} \mathcal C_2$ if and only if $\forall p_1(w_1)\in \mathcal C_1$, $\exists p_2(w_2) \in \mathcal C_2$ such that $p_1(w_1)\succ p_2(w_2)$. In this case, $\mathcal C_1 \succ_{\mathcal C} \mathcal C_2$ and $\mathcal C_2 \succ_{\mathcal C} \mathcal C_3$ imply $\mathcal C_1 \succ_{\mathcal C} \mathcal C_3$, and thus we get a valid ordering. In particular, this definition of $\mathcal C_1 \succ_{\mathcal C} \mathcal C_2$ is equivalent to $\min_{p_1(w_1)\in \mathcal C_1}\langle u(w_1) \rangle > \min_{p_2(w_2)\in \mathcal C_2}\langle u(w_2) \rangle$.
\begin{proof}
To show it, we note that if $\mathcal C_1 \succ_{\mathcal C} \mathcal C_2$  we get $\langle u(w_1) \rangle > \min_{p_2(w_2)\in \mathcal C_2}\langle u(w_2) \rangle$ $\forall p_1(w_1)\in \mathcal C_1$, then $\min_{p_1(w_1)\in \mathcal C_1}\langle u(w_1) \rangle > \min_{p_2(w_2)\in \mathcal C_2}\langle u(w_2) \rangle$. On the other hand, if $\min_{p_1(w_1)\in \mathcal C_1}\langle u(w_1) \rangle > \min_{p_2(w_2)\in \mathcal C_2}\langle u(w_2) \rangle$,  $\forall p_1(w_1) \in \mathcal C_1$ we get $\langle u(w_1)\rangle> \min_{p_2(w_2)\in \mathcal C_2}\langle u(w_2) \rangle$, then $\exists p^*_2(w_2)\in \mathcal C_2$ giving $\int u(w_2)p_2^*(w_2)dw_2=\min_{p_2(w_2)\in \mathcal C_2}\langle u(w_2) \rangle$, from which  $\mathcal C_1 \succ_{\mathcal C} \mathcal C_2$.
\end{proof}
Thus, by using this definition, we can optimize the work extraction by searching the maximum of $\min_{p(w)\in \mathcal C}\langle u(w)\rangle$ over the unitary operators $U_S$.

\section{Daemonic expected utility}\label{sec.daemouti}
Given a utility function $u(w)$, we can define the optimal expected utility of an arbitrary state $\rho_S$ as~\cite{Francica24}
\begin{equation}\label{eq.exputiopti}
\mathcal U(\rho_S) = \max_{U_S} \langle u(w) \rangle\,,
\end{equation}
where the average is calculated as
\begin{equation}\label{eq.aveu}
\langle u(w) \rangle = \int u(w) p_q(w,\rho_S,U_S) dw
\end{equation}
and $p_q(w,\rho_S,U_S)$ is the quasiprobability distribution of work defined as~\cite{Francica22,Francica222}
\begin{eqnarray}
\nonumber p_q(w,\rho_S,U_S)&=&\sum_{k,j,i} \text{Re}\bra{\epsilon_i}\rho_S\ket{\epsilon_j}\bra{\epsilon_j}U_S^\dagger \ket{\epsilon_k}\bra{\epsilon_k}U_S \ket{\epsilon_i}\\
 && \times\delta(w-q\epsilon_i-(1-q)\epsilon_j+\epsilon_k)\,.
\end{eqnarray}
We note that for a risk neutral agent, so that $u(w)=w$, the optimal expected utility is equal to the ergotropy, $\mathcal U(\rho_S)=\mathcal E(\rho_S)$.
In particular, we will focus on the symmetric quasiprobability representation with $q=1/2$, giving a minimum of $\mathcal U(\rho_S)$ in the function of $q$ (at least in the case of an exponential utility~\cite{Francica24}). We recall that $p_q(w,\rho_S,U_S)$ reduces to the quasiprobability distribution of Ref.~\cite{Allahverdyan14} for $q=0,1$ and the one of Ref.~\cite{Solinas15} for $q=1/2$.
Thus, we consider a daemonic protocol for the work extraction, with certain optimal unitary cycles not necessarily equal to the daemonic ergotropy ones.  We denote with $U_{S|a}$ the unitary operators of the optimal unitary cycles conditioned by the outcomes $a$ of the measurements, so that the final reduced state of $S$ is $U_{S|a}\rho_{S|a}U^\dagger_{S|a}$ with probability $p_a$. This daemonic protocol gives an optimal expected utility which is equal to the `daemonic' expected utility defined as
\begin{equation}
\mathcal U_{\{\Pi^A_a\}} ( \rho_{SA}) = \sum_a p_a \mathcal U ( \rho_{S|a})\,,
\end{equation}
where explicitly $\mathcal U ( \rho_{S|a}) = \int u(w) p_q(w,\rho_{S|a},U_{S|a})dw$. 
We note that, for any convex combination $\rho_S=\sum_a p_a \rho_a$ of density matrices $\rho_a$,  we get $p_q(w,\rho_S,U_S) = \sum_a p_a p_q(w,\rho_a,U_S)$ from which it is easy to see that $\mathcal U(\rho_S)\leq \sum_a p_a \mathcal U(\rho_a)$. Thus, by noting that $\rho_S=\sum_a p_a \rho_{S|a}$, we get $\mathcal U_{\{\Pi^A_a\}} ( \rho_{SA})\geq \mathcal U(\rho_S)$.
We define the maximum gain
\begin{equation}
\delta \mathcal U (\rho_{SA})= \max_{\{\Pi^A_a\}}\mathcal U_{\{\Pi^A_a\}} ( \rho_{SA}) -\mathcal U ( \rho_S)\geq 0\,.
\end{equation}
Thus, the agent prefers the optimal daemonic protocol instead of a local unitary cycle in order to extract work from $S$ if the gain is positive, $\delta \mathcal U (\rho_{SA})>0$, otherwise the agent is indifferent to the choice. The situation is schematically represented in Fig.~\ref{fig:scheme}.
\begin{figure}
[t!]
\centering
\includegraphics[width=0.65\columnwidth]{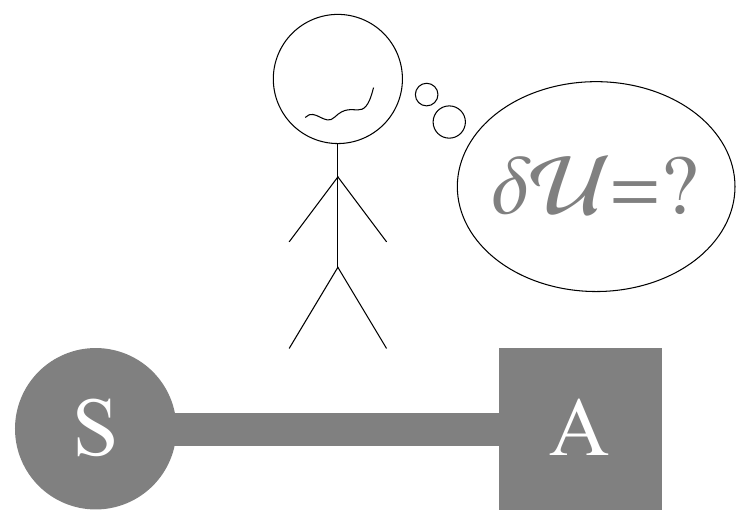}
\caption{The system is made of two parties, $S$ and $A$, represented by a circle and a square, respectively. The two parties are connected by a line representing the initial correlations. Local measurements and unitary operations are performed on the square and the circle, respectively. An agent must choose between extracting work from $S$ using the daemonic protocol or conventionally using a local unitary cycle without performing measurements on $A$ or communicating the outcomes $a$.
}
\label{fig:scheme}
\end{figure}
\subsection{Constant absolute risk aversion}
We start to investigate the case of a constant absolute risk aversion $r_A(w)=\text{const}$. In particular, we focus on the exponential utility
\begin{equation}\label{eq.uteqxpo}
u(w) = \frac{1}{r}(1-e^{-rw})
\end{equation}
for $r\neq 0$, and $u(w)=w$ for $r=0$, which is a strictly increasing function having absolute risk aversion $r_A(w)=r$.
To perform our calculations for $q=1/2$, it is useful to consider the spectral decompositions $e^{-rH_S/2}\rho_S e^{-rH_S/2}=\sum_k u_k \ket{u_k}\bra{u_k}$ with $u_k\geq u_{k+1}$ and $e^{-rH_S/2}\rho_{S|a} e^{-rH_S/2}=\sum_k u^a_k \ket{u^a_k}\bra{u^a_k}$ with $u^a_k\geq u^a_{k+1}$ for all $a$.
 Thus, for $q=1/2$, for any state $\rho_S$, from Eq.~\eqref{eq.exputiopti} we get~\cite{Francica24}
\begin{equation}\label{eq.exputiexpo}
\mathcal U ( \rho_S) = \frac{1}{r}\left(1-\sum_k u_k e^{r \epsilon_k}\right)\,,
\end{equation}
from which the daemonic expected utility  trivially reads
\begin{equation}
\mathcal U_{\{\Pi^A_a\}} ( \rho_{SA})  = \sum_a p_a \frac{1}{r}\left(1-\sum_k u^a_k e^{r \epsilon_k}\right)\,.
\end{equation}
With the aim to study the states $\rho_{SA}$ such that $\delta \mathcal U (\rho_{SA})=0$ in the case of a constant absolute risk aversion, we introduce some lemmas.
By considering the identity
\begin{equation}
u_k = \sum_a p_a \sum_j u^a_j \abs{\braket{u_k}{u^a_j}}^2\,,
\end{equation}
coming from $\rho_S =\sum_a p_a \rho_{S|a}$, we get that:
\begin{lemma}\label{lemma1}
For a given set $\{\Pi^A_a\}$, the daemonic gain $\mathcal U_{\{\Pi^A_a\}} ( \rho_{SA}) -\mathcal U ( \rho_S)$ can be expressed as
\begin{equation}
\mathcal U_{\{\Pi^A_a\}} ( \rho_{SA}) -\mathcal U ( \rho_S)= \sum_a p_a \tilde{\mathcal E}(\tilde \rho_{S|a})\geq 0\,,
\end{equation}
where $\tilde{\mathcal E}(\tilde \rho_{S|a})$ is the ergotropy of the non-normalized state $\tilde \rho_{S|a} = e^{-rH_S/2}\rho_{S|a}e^{-rH_S/2}$ with respect to the Hamiltonian $\tilde H_S = \sum_k y_k \ket{u_k}\bra{u_k}$ with energies $y_k=\frac{e^{r\epsilon_k}}{r}$ so that $y_k<y_{k+1}$, which explicitly reads
\begin{equation}
\tilde{\mathcal E}(\tilde \rho_{S|a})=\sum_{k,j} u^a_j\left( \abs{\braket{u_k}{u^a_j}}^2-\delta_{j,k}\right)y_k\,.
\end{equation}
\end{lemma}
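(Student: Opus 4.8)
The plan is to compute the daemonic gain directly from the closed-form exponential-utility expressions and to recognise the result as a $p_a$-weighted sum of ergotropies of rescaled states. First I would subtract Eq.~\eqref{eq.exputiexpo} from its daemonic counterpart: since $\sum_a p_a=1$ the two leading $1/r$ terms cancel, leaving
\begin{equation*}
\mathcal U_{\{\Pi^A_a\}} ( \rho_{SA}) -\mathcal U ( \rho_S) = \frac{1}{r}\sum_k e^{r\epsilon_k}\left(u_k - \sum_a p_a u^a_k\right) = \sum_k y_k\left(u_k - \sum_a p_a u^a_k\right),
\end{equation*}
with $y_k=e^{r\epsilon_k}/r$.

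Next I would insert the identity $u_k=\sum_a p_a\sum_j u^a_j\abs{\braket{u_k}{u^a_j}}^2$ stated just above the lemma — which is simply $\bra{u_k}\tilde\rho_S\ket{u_k}=\sum_a p_a\bra{u_k}\tilde\rho_{S|a}\ket{u_k}$, with $\tilde\rho_S=e^{-rH_S/2}\rho_S e^{-rH_S/2}$, expanded in the eigenbasis $\{\ket{u^a_j}\}$ of $\tilde\rho_{S|a}$ — and rewrite $u^a_k=\sum_j u^a_j\delta_{j,k}$, so that the gain becomes
\begin{equation*}
\mathcal U_{\{\Pi^A_a\}} ( \rho_{SA}) -\mathcal U ( \rho_S) = \sum_a p_a\sum_{k,j} u^a_j\left(\abs{\braket{u_k}{u^a_j}}^2-\delta_{j,k}\right)y_k.
\end{equation*}
I would then match the inner double sum against the ergotropy formula $\mathcal E(\rho_S)=\sum_{k,j}r_j(\abs{\braket{\epsilon_k}{r_j}}^2-\delta_{j,k})\epsilon_k$ of Sec.~\ref{sec.ergotropy}: it is exactly the ergotropy of the positive, non-normalized operator $\tilde\rho_{S|a}=\sum_k u^a_k\ket{u^a_k}\bra{u^a_k}$, whose populations are in decreasing order by the stated spectral convention, with respect to the Hamiltonian $\tilde H_S=\sum_k y_k\ket{u_k}\bra{u_k}$, whose eigenvectors are those of $\tilde\rho_S$ and whose eigenvalues are $y_k=e^{r\epsilon_k}/r$. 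Here I would check that $y_k<y_{k+1}$ for every $r\neq0$: for $r>0$ it is immediate from $\epsilon_k<\epsilon_{k+1}$, while for $r<0$ the map $\epsilon\mapsto e^{r\epsilon}$ is decreasing but the further division by $r<0$ reverses the inequality once more, so the energies are again increasing. This identifies the bracket with $\tilde{\mathcal E}(\tilde\rho_{S|a})$ and produces the claimed formula.

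Finally, the inequality $\sum_a p_a\tilde{\mathcal E}(\tilde\rho_{S|a})\geq0$ is inherited term by term: $\tilde{\mathcal E}(\tilde\rho_{S|a})=\Tr{\tilde H_S\tilde\rho_{S|a}}-\min_{V}\Tr{\tilde H_S V\tilde\rho_{S|a}V^\dagger}\geq0$ because $V=I$ is admissible in the minimisation, a variational statement that requires no normalisation of $\tilde\rho_{S|a}$, and the weights $p_a\geq0$. I expect the only delicate points to be the bookkeeping of orderings and signs when $r<0$, needed so that $\tilde{\mathcal E}$ is genuinely an ergotropy with the standard increasing-energy and decreasing-population conventions, together with noting explicitly that the ergotropy formula and its non-negativity carry over verbatim to a non-normalized positive operator; the remaining algebra is routine.
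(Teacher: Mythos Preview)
Your proof is correct and follows exactly the route the paper indicates: the paper's only stated ingredient is the identity $u_k=\sum_a p_a\sum_j u^a_j\abs{\braket{u_k}{u^a_j}}^2$, and you carry out precisely the substitution it implies, with the additional care of checking the $y_k$ ordering for $r<0$ and the non-negativity of $\tilde{\mathcal E}$ for non-normalized states. There is no substantive difference in approach; you have simply filled in the algebra the paper leaves implicit.
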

Thus, $\mathcal U_{\{\Pi^A_a\}} ( \rho_{SA}) = \mathcal U ( \rho_S)$ if and only if all the ergotropies $\tilde{\mathcal E}(\tilde \rho_{S|a})$ are zero, i.e., for all the outcomes $a$, $\tilde \rho_{S|a}$ is passive with respect to the Hamiltonian $\tilde H_S$. This means that, for all the outcomes $a$, $[\tilde \rho_{S|a},\tilde H_S]=0$, i.e., all the conditional non-normalized states $\tilde \rho_{S|a}$ are diagonal with respect to the basis $\{\ket{u_k}\}$, and $\bra{u_k} \tilde \rho_{S|a}\ket{u_k}$ are in decreasing order, i.e.,$\bra{u_k} \tilde \rho_{S|a}\ket{u_k}\geq \bra{u_{k+1}} \tilde \rho_{S|a}\ket{u_{k+1}}$. Furthermore, we get that:
\begin{lemma}\label{lemma2}
Given a state $\rho_{SA}$, if all the conditional states $\rho_{S|a}$ are diagonal with respect to the same basis $\{\ket{r_k}\}$ for any set $\{\Pi^A_a\}$, then
\begin{equation}\label{eq.statelemma}
\rho_{SA} = \sum_k \ket{r_k}\bra{r_k}\otimes C^A_k\,,
\end{equation}
where $C^A_k$ are positive semidefinite matrices.
\end{lemma}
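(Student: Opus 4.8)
The plan is to work in the fixed basis $\{\ket{r_k}\}$ on the $S$ side and to show that every block of $\rho_{SA}$ except the ``diagonal'' ones must vanish. Expand
\begin{equation}
\rho_{SA}=\sum_{k,l}\ket{r_k}\bra{r_l}\otimes M_{kl}\,,\qquad M_{lk}=M_{kl}^\dagger\,,
\end{equation}
where the $M_{kl}$ are operators on $\mathcal H_A$ and the constraint is Hermiticity of $\rho_{SA}$. A direct computation shows that for a measurement $\{\Pi^A_a=\ket{a}\bra{a}\}$ one has $\ParTr{A}{(I^S\otimes\Pi^A_a)\rho_{SA}(I^S\otimes\Pi^A_a)}=\sum_{k,l}\bra{a}M_{kl}\ket{a}\,\ket{r_k}\bra{r_l}$, so the unnormalized conditional state reads $p_a\,\rho_{S|a}=\sum_{k,l}\bra{a}M_{kl}\ket{a}\,\ket{r_k}\bra{r_l}$. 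Requiring $\rho_{S|a}$ to be diagonal in $\{\ket{r_k}\}$ is therefore equivalent to $\bra{a}M_{kl}\ket{a}=0$ for all $k\neq l$; when $p_a=0$ this holds automatically, since then the left-hand side vanishes for all $k,l$.

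The decisive step is to exploit that the hypothesis holds for \emph{every} set $\{\Pi^A_a\}$, that is, for every orthonormal basis of $\mathcal H_A$. Since any unit vector can be taken as a basis element, the condition $\bra{a}M_{kl}\ket{a}=0$ for $k\neq l$ holds for all unit vectors, and by homogeneity for all $\ket{a}\in\mathcal H_A$. On a complex Hilbert space this forces $M_{kl}=0$: by the polarization identity $\bra{a}M_{kl}\ket{b}$ is a fixed linear combination of the numbers $\bra{v}M_{kl}\ket{v}$ with $v\in\{a+b,\,a-b,\,a+ib,\,a-ib\}$, all of which are zero, so every matrix element of $M_{kl}$ vanishes.

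Collecting the surviving terms gives $\rho_{SA}=\sum_k\ket{r_k}\bra{r_k}\otimes M_{kk}$, and setting $C^A_k=M_{kk}$ yields the claimed decomposition. Positivity of the $C^A_k$ follows from $\rho_{SA}\geq 0$: one recognizes $C^A_k=\bra{r_k}\rho_{SA}\ket{r_k}$ as a partial matrix element on $S$, and for any $\ket{\psi}\in\mathcal H_A$ one has $\bra{\psi}C^A_k\ket{\psi}=(\bra{r_k}\otimes\bra{\psi})\,\rho_{SA}\,(\ket{r_k}\otimes\ket{\psi})\geq 0$.

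I expect the only genuine subtlety to lie in handling the quantifier ``for any set $\{\Pi^A_a\}$''. Restricting to measurements in a single fixed basis of $\mathcal H_A$ would annihilate only the diagonal quantities $\bra{a}M_{kl}\ket{a}$ and leave the off-diagonal matrix elements of $M_{kl}$ undetermined, which is insufficient; one genuinely has to let the measurement basis (equivalently, the rank-one projector $\ket{a}\bra{a}$) range over all of $\mathcal H_A$ in order to run the polarization argument. The remaining pieces — the partial-trace computation, the $p_a=0$ case, and the positivity of $C^A_k$ — are routine.
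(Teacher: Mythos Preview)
Your proof is correct and follows essentially the same route as the paper: expand $\rho_{SA}$ in $S$-blocks and use the freedom in the measurement basis on $A$ to kill the off-diagonal blocks $M_{kl}$ for $k\neq l$. The paper carries this out by fixing a basis $\{\ket{a}\}$ and, assuming some $C^{\bar a\bar a'}_{kk'}\neq 0$, measuring in a basis containing $(\ket{\bar a}\pm\ket{\bar a'})/\sqrt{2}$ to force a contradiction; your polarization identity is precisely the abstract version of that same step, and your treatment of the $p_a=0$ case and of the positivity of $C^A_k$ is a bit more explicit than the paper's.
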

We note that the state in Eq.~\eqref{eq.statelemma} is a classical-quantum state, $\rho_{SA}\in \mathcal C_S$.
\begin{proof}
To prove Eq.~\eqref{eq.statelemma}, we note that an arbitrary state $\rho_{SA}$ can be written as
\begin{equation}\label{eq.arbilemma1}
\rho_{SA} = \sum_{a,a',k,k'} C^{a a'}_{kk'} \ket{r_k}\bra{r_{k'}} \otimes \ket{a} \bra{a'}\,,
\end{equation}
where $\{\ket{a}\}$ is a basis of $\mathcal H_A$.
By assumption, we have that $\rho_{S|a} = \sum_k r^a_k \ket{r_k} \bra{r_k}$ for all $a$, thus $C^{aa}_{kk'}=p_a r^a_k \delta_{kk'}$, where $p_a$ is the probability corresponding to the outcome $a$.
We expand the sum in Eq.~\eqref{eq.arbilemma1} getting
\begin{eqnarray}\label{eq.statedimo}
\nonumber \rho_{SA} &=& \sum_{a,k} p_a r^a_k \ket{r_k}\bra{r_k} \otimes \ket{a} \bra{a}+ \sum_{a\neq a',k} C^{aa'}_{kk} \ket{r_k}\bra{r_k} \otimes \ket{a}\bra{a'}\\
&&+\sum_{a>a',k\neq k'} \ket{r_k}\bra{r_{k'}} \otimes (C^{aa'}_{kk'}\ket{a}\bra{a'}+C^{a'a}_{kk'}\ket{a'}\bra{a})\,.
\end{eqnarray}
Thus, if the last term is zero, we get Eq.~\eqref{eq.statelemma} with the matrices $C^A_k$ having entries $C^{aa'}_{kk}$ with respect to the basis $\{\ket{a}\}$. To show that it is zero, we note that if there exist at least $ \bar a $ and $\bar a'\neq a$ such that $C^{\bar a\bar a'}_{kk'}\neq0$ and $C^{\bar a\bar a'}_{kk'}\neq 0$ for some $k\neq k'$ and the last term is non-zero, we can perform measurements with respect to some basis different from $\{\ket{a}\}$, e.g., containing the states $(\ket{\bar a} \pm \ket{\bar a'})/\sqrt{2}$, getting $\rho_{S|\bar a}$ non-diagonal with respect to the basis $\{\ket{r_k}\}$, which is absurd since $\rho_{S|a}$ is diagonal for any set $\{\Pi_a\}$, which completes the proof.
\end{proof}
In general we define a dephasing map $\Delta$ as $\Delta(X) = \sum_k \Pi_k X \Pi_k$, where $\Pi_k=\ket{k}\bra{k}$ such that $\Pi_k \Pi_j = \delta_{k,j}\Pi_k$ and $\sum_k \Pi_k=I$. By recalling that $A\geq B$ means that the matrix $A-B$ is positive semidefinite, we get the following lemma:
\begin{lemma}\label{lemma3}
Given two matrices $A$ and $B$, $\Delta(A)\geq \Delta(B)$ for any dephasing map $\Delta$ if and only if $A\geq B$.
\end{lemma}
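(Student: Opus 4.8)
The plan is to prove the two directions separately, with the nontrivial one being the forward implication $\Delta(A)\geq\Delta(B)$ for all $\Delta$ $\Rightarrow$ $A\geq B$. The reverse direction is immediate: if $A\geq B$, i.e. $A-B\geq 0$, then for any dephasing map $\Delta(X)=\sum_k\Pi_k X\Pi_k$ we have $\Delta(A)-\Delta(B)=\Delta(A-B)=\sum_k\Pi_k(A-B)\Pi_k$, and each summand $\Pi_k(A-B)\Pi_k$ is positive semidefinite (since conjugating a PSD matrix by $\Pi_k=\ket{k}\bra{k}$ preserves positivity), hence so is the sum; thus $\Delta(A)\geq\Delta(B)$ for every $\Delta$.

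For the forward direction I would argue by contrapositive: assume $A\not\geq B$, so $M:=A-B$ has a negative eigenvalue, meaning there is a unit vector $\ket{\psi}$ with $\bra{\psi}M\ket{\psi}<0$. The goal is to exhibit a single basis $\{\ket{k}\}$ — equivalently a single dephasing map $\Delta$ — for which $\Delta(M)=\Delta(A)-\Delta(B)$ fails to be positive semidefinite. The natural choice is to pick any orthonormal basis $\{\ket{k}\}$ whose first element is $\ket{k_1}=\ket{\psi}$. Then $\bra{\psi}\Delta(M)\ket{\psi}=\sum_k\abs{\braket{\psi}{k}}^2\bra{k}M\ket{k}=\bra{\psi}M\ket{\psi}<0$, since $\braket{\psi}{k}=\delta_{k,k_1}$ for this basis. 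Hence $\Delta(M)$ has a negative expectation value on $\ket{\psi}$ and is not positive semidefinite, i.e. $\Delta(A)\not\geq\Delta(B)$, contradicting the hypothesis. This completes the contrapositive and hence the forward direction.

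The main (minor) obstacle is simply making sure the ``test'' dephasing map is legitimate: one must check that completing $\ket{\psi}$ to an orthonormal basis yields a valid dephasing map in the sense defined just before the lemma, i.e. rank-one orthogonal projectors summing to the identity — which is automatic by Gram–Schmidt. A secondary point worth stating cleanly is that $\Delta$ is linear, so $\Delta(A)-\Delta(B)=\Delta(A-B)$, which is what lets us reduce everything to a single matrix $M$; this also quietly uses that positivity of a Hermitian matrix is equivalent to nonnegativity of all its diagonal expectation values in every basis. If one prefers a direct (non-contrapositive) proof of the forward direction, one can instead observe that for an arbitrary unit vector $\ket{\psi}$, choosing the basis adapted to $\ket{\psi}$ as above gives $\bra{\psi}A\ket{\psi}=\bra{\psi}\Delta(A)\ket{\psi}\geq\bra{\psi}\Delta(B)\ket{\psi}=\bra{\psi}B\ket{\psi}$ by hypothesis, and since $\ket{\psi}$ was arbitrary this is exactly $A\geq B$; I would likely present it this second way, as it is shorter and avoids contraposition.
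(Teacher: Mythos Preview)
Your proposal is correct and follows essentially the same approach as the paper: both reduce via linearity to $\Delta(A-B)\geq 0$ for all $\Delta$, then observe this means every diagonal entry of $A-B$ in every basis is nonnegative, which is precisely $\bra{\psi}(A-B)\ket{\psi}\geq 0$ for all $\ket{\psi}$. The paper states this equivalence in one sentence, whereas you make the ``choose a basis containing $\ket{\psi}$'' step explicit; your second (direct) presentation is closest in spirit to the paper's argument.
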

\begin{proof}
To prove it, we note that the dephasing map is a linear map, thus $\Delta(A)\geq \Delta(B)$ is equivalent to $\Delta(A-B)\geq 0$. Thus, $\Delta(A)\geq \Delta(B)$ for any dephasing map $\Delta$ if and only if the diagonal elements of the matrix $A-B$ are non-negative with respect to any basis. This is equivalent to $\bra{\psi} A-B \ket{\psi}\geq 0$ for any state $\ket{\psi}$, i.e., $A-B$ is positive semidefinite, $A- B\geq 0$, which completes the proof.
\end{proof}
Thus, with the help of these three lemmas, we get the theorem:
\begin{theorem}\label{theorem}
We have a zero maximum gain $\delta \mathcal U (\rho_{SA}) = 0 $ if and only if $\rho_{SA}\in \mathcal S$ and has the form
\begin{equation}\label{eq.statetheo}
\rho_{SA} = \sum_{k}e^{rH_S/2}\ket{u_k}\bra{u_k}e^{rH_S/2} \otimes C^A_k\,,
\end{equation}
where the positive semidefinite matrices $C^A_k\geq 0$ are such that $C^A_k\geq C^{A}_{k+1}$ for all $k=1,\ldots,d_S-1$.
\end{theorem}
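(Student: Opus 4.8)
The plan is to prove both implications by passing to the ``tilted'' operators $\tilde\rho_{SA}=(e^{-rH_S/2}\otimes I^A)\rho_{SA}(e^{-rH_S/2}\otimes I^A)$ and $\tilde\rho_{S|a}=e^{-rH_S/2}\rho_{S|a}e^{-rH_S/2}$, for which the problem reduces to ordinary ergotropy and passivity with respect to $\tilde H_S$. First I would note that, since $\mathcal U_{\{\Pi^A_a\}}(\rho_{SA})-\mathcal U(\rho_S)=\sum_a p_a\tilde{\mathcal E}(\tilde\rho_{S|a})\geq 0$ for every set $\{\Pi^A_a\}$ by Lemma~\ref{lemma1}, the equality $\delta\mathcal U(\rho_{SA})=0$ is equivalent to $\mathcal U_{\{\Pi^A_a\}}(\rho_{SA})=\mathcal U(\rho_S)$ for \emph{every} $\{\Pi^A_a\}$. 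By Lemma~\ref{lemma1} and the observation following it, this holds if and only if, for every $\{\Pi^A_a\}$ and every outcome $a$, the operator $\tilde\rho_{S|a}$ is passive with respect to $\tilde H_S$, i.e. (i) $[\tilde\rho_{S|a},\tilde H_S]=0$, so $\tilde\rho_{S|a}$ is diagonal in $\{\ket{u_k}\}$, and (ii) the populations $\bra{u_k}\tilde\rho_{S|a}\ket{u_k}$ are non-increasing in $k$.

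For the ``only if'' part, I would use that $p_a\tilde\rho_{S|a}=\ParTr{A}{(I^S\otimes\Pi^A_a)\tilde\rho_{SA}(I^S\otimes\Pi^A_a)}$, since $e^{-rH_S/2}\otimes I^A$ commutes with $I^S\otimes\Pi^A_a$ and factors out of the partial trace over $A$; hence $\tilde\rho_{S|a}$ is proportional to a conditional state of the (normalized) operator $\tilde\rho_{SA}$. Condition (i), holding for all $\{\Pi^A_a\}$, lets me apply Lemma~\ref{lemma2} to $\tilde\rho_{SA}$, giving $\tilde\rho_{SA}=\sum_k\ket{u_k}\bra{u_k}\otimes C^A_k$ with $C^A_k\geq 0$; conjugating back by $e^{rH_S/2}\otimes I^A$ then yields precisely Eq.~\eqref{eq.statetheo}, which is of the form $\sum_k\sigma^S_k\otimes C^A_k$ with $\sigma^S_k\geq 0$ and is therefore manifestly separable, so $\rho_{SA}\in\mathcal S$. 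For the ordering, I would insert this form into condition (ii): taking $\Pi^A_a=\ket{a}\bra{a}$ one computes $\bra{u_k}\tilde\rho_{S|a}\ket{u_k}=\bra{a}C^A_k\ket{a}/p_a$, so (ii) for every $a$ and every choice of basis $\{\ket{a}\}$ says that the diagonal entries of $C^A_k-C^A_{k+1}$ are non-negative in every basis; by Lemma~\ref{lemma3} this is equivalent to $C^A_k\geq C^A_{k+1}$.

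The ``if'' part is the same chain of equivalences read in reverse. Starting from Eq.~\eqref{eq.statetheo} with $C^A_k\geq C^A_{k+1}\geq 0$, I would check that $e^{-rH_S/2}\rho_S e^{-rH_S/2}=\sum_k \Tr{C^A_k}\ket{u_k}\bra{u_k}$, so that the labelling $u_k\geq u_{k+1}$ (inherited from $\Tr{C^A_k}\geq\Tr{C^A_{k+1}}$) is consistent with the definition of $\{\ket{u_k}\}$ and $\tilde H_S$; then, for an arbitrary $\{\Pi^A_a=\ket{a}\bra{a}\}$, the computation above gives $p_a\tilde\rho_{S|a}=\sum_k\bra{a}C^A_k\ket{a}\,\ket{u_k}\bra{u_k}$, which is diagonal in $\{\ket{u_k}\}$ with non-increasing populations, hence passive with respect to $\tilde H_S$. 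By Lemma~\ref{lemma1} this forces $\delta\mathcal U(\rho_{SA})=0$.

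The step I expect to require the most care is the bookkeeping around the tilting map: verifying that conditional states transform as claimed, that Lemma~\ref{lemma2} (stated for normalized states) applies after rescaling $\tilde\rho_{SA}$, and---crucially---that forcing $C^A_k\geq C^A_{k+1}$ really does need condition (ii) for \emph{all} measurement bases on $A$, which is exactly what Lemma~\ref{lemma3} is there to convert into an operator inequality. A minor technical point is the possible degeneracy of $e^{-rH_S/2}\rho_S e^{-rH_S/2}$, for which $\{\ket{u_k}\}$ and $\tilde H_S$ are not unique; this is harmless, because on a degenerate block equal traces together with $C^A_k\geq C^A_{k+1}$ force the $C^A_k$ to coincide, so the decomposition in Eq.~\eqref{eq.statetheo} is basis-independent there.
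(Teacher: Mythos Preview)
Your proof is correct and follows essentially the same route as the paper's: use Lemma~\ref{lemma1} to reduce $\delta\mathcal U(\rho_{SA})=0$ to passivity of every $\tilde\rho_{S|a}$ with respect to $\tilde H_S$, apply Lemma~\ref{lemma2} to the tilted state $\tilde\rho_{SA}$ to obtain the classical-quantum form, conjugate back to get Eq.~\eqref{eq.statetheo}, and use Lemma~\ref{lemma3} to convert the population ordering into $C^A_k\geq C^A_{k+1}$. You are in fact more careful than the paper, which only spells out the ``only if'' direction and leaves the converse implicit; your explicit check that $\Tr{C^A_k}=u_k$ (so the labelling is consistent), your remark on normalization when invoking Lemma~\ref{lemma2}, and your handling of degeneracies are all welcome additions rather than deviations.
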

\begin{proof}
From Lemma~\ref{lemma1}, we get that $\delta \mathcal U (\rho_{SA}) = 0 $ if and only if $\tilde\rho_{S|a}$ are passive with respect to the Hamiltonian $\tilde H_S$ for all the outcomes $a$ for any set $\{\Pi^A_a\}$. Thus, by considering $\tilde \rho_{SA}= e^{-rH_S/2}\otimes I^A\rho_{SA}e^{-rH_S/2}\otimes I^A$, since $\tilde\rho_{S|a}$ is diagonal with respect to the basis $\{\ket{u_k}\}$ for any set $\{\Pi^A_a\}$, from Lemma~\ref{lemma2} we get $\tilde \rho_{SA} = \sum_{k}\ket{u_k}\bra{u_k} \otimes C^A_k$, from which it results Eq.~\eqref{eq.statetheo}. Furthermore, $\bra{u_k}\tilde\rho_{S|a}\ket{u_k}$ are in decreasing order for any set $\{\Pi^A_a\}$, by noting that $\bra{u_k}\tilde\rho_{S|a}\ket{u_k} = \bra{a}C^A_k\ket{a}$ where we have considered $\Pi^A_a = \ket{a}\bra{a}$, this is equivalent to $\Delta(C^A_{k})\geq \Delta(C^A_{k+1})$ for any dephasing map $\Delta$. Thus, from Lemma~\ref{lemma3} we get $C^A_k\geq C^{A}_{k+1}$.
\end{proof}
We note that the set of the states $\rho_{SA}$ with zero gain $\delta \mathcal U(\rho_{SA})=0$ has measure zero since these states are obtained from classical-quantum states $\chi^{c-q}_{SA}$ by performing the transformation $\rho_{SA}= e^{rH_S/2}\otimes I^A \chi^{c-q}_{SA} e^{rH_S/2}\otimes I^A/\Tr{\chi^{c-q}_{SA}e^{r H_S}\otimes I^A}$, which form a set of measure zero~\cite{alessandrodiscord}. Thus, almost all states $\rho_{SA}$ give a positive gain $\delta \mathcal U(\rho_{SA})>0$.
Furthermore, from Theorem~\ref{theorem} it follows that:
\begin{corollary}\label{corollary}
We get $r=0$ or $[\rho_S,H_S]=0$ if and only if $\delta\mathcal U(\rho_{SA})=0 \Rightarrow \rho_{SA}\in \mathcal C_S$.
\end{corollary}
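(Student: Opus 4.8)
The plan is to read the statement with $\rho_S$ kept fixed (it is the marginal of the state one starts from), so that the implication ``$\delta\mathcal U(\rho_{SA})=0\Rightarrow\rho_{SA}\in\mathcal C_S$'' is meant to hold for \emph{every} bipartite extension $\rho_{SA}$ of $\rho_S$, and I want to show this happens exactly when $r=0$ or $[\rho_S,H_S]=0$. Everything rests on Theorem~\ref{theorem}: writing $\tilde\rho_S=e^{-rH_S/2}\rho_S e^{-rH_S/2}=\sum_k u_k\ket{u_k}\bra{u_k}$ and $\ket{v_k}=e^{rH_S/2}\ket{u_k}$, a zero-gain extension is precisely a state $\rho_{SA}=\sum_k\ket{v_k}\bra{v_k}\otimes C^A_k$ with $C^A_1\geq\cdots\geq C^A_{d_S}\geq0$, and matching the marginal forces $\Tr{C^A_k}=u_k$. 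The first step I would record is an elementary remark: inside each degenerate eigenspace of $\tilde\rho_S$ the matrices $C^A_k$ all coincide, since there $C^A_k-C^A_{k+1}\geq0$ has vanishing trace and is therefore zero; hence $\tilde\rho_{SA}:=e^{-rH_S/2}\otimes I^A\,\rho_{SA}\,e^{-rH_S/2}\otimes I^A=\sum_B Q_B\otimes C^A_B$, the sum running over the spectral projectors $Q_B$ of $\tilde\rho_S$, an expression that does not depend on which orthonormal basis is chosen inside each $Q_B$.

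For the forward direction: if $r=0$ then $\ket{v_k}=\ket{u_k}$ and $\rho_{SA}=\sum_k\ket{u_k}\bra{u_k}\otimes C^A_k$ is manifestly in $\mathcal C_S$. If $r\neq0$ but $[\rho_S,H_S]=0$, then $[\tilde\rho_S,H_S]=0$ as well (because $e^{-rH_S/2}$ commutes with $H_S$), so each $Q_B$ commutes with $H_S$; by the remark above I may expand each $Q_B$ over an orthonormal basis of $H_S$-eigenvectors without changing $\rho_{SA}$, and then $\ket{v_k}=e^{rH_S/2}\ket{u_k}$ is proportional to $\ket{u_k}$, so the $\ket{v_k}$ are mutually orthogonal and $\rho_{SA}$ is a combination of terms (rank-one orthogonal projector on $\mathcal H_S$)$\otimes$(positive matrix on $\mathcal H_A$), i.e.\ $\rho_{SA}\in\mathcal C_S$.

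For the converse I would argue by contraposition: suppose $r\neq0$ and $[\rho_S,H_S]\neq0$, equivalently $[\tilde\rho_S,e^{rH_S}]\neq0$. Its $(l,k)$ entry in the $\{\ket{u_j}\}$ basis is $(u_l-u_k)\bra{u_l}e^{rH_S}\ket{u_k}$, so there is a pair with $u_l\neq u_k$ and $\bra{u_l}e^{rH_S}\ket{u_k}\neq0$; relabel so that $u_k>u_l$ (both are $>0$ when $\rho_S$ has full rank) and let $B$ be the eigenspace of $\tilde\rho_S$ containing $\ket{u_l}$. Define the extension by $C^A_j=u_jI^A/d_A$ for $j\notin B$ and $C^A_j=u_B\tau^A$ for $j\in B$, with $\tau^A$ a density matrix close to, but distinct from, $I^A/d_A$: the strict gaps between distinct eigenvalues of $\tilde\rho_S$ leave room for $C^A_1\geq\cdots\geq C^A_{d_S}$ to persist when $\tau^A$ is close enough to $I^A/d_A$, so by Theorem~\ref{theorem} $\delta\mathcal U(\rho_{SA})=0$, and the marginal is still $\rho_S$. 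Writing $M:=\sum_{j\in B}\ket{v_j}\bra{v_j}=e^{rH_S/2}Q_Be^{rH_S/2}$, this state equals $\rho_{SA}=M\otimes u_B\tau^A+(\rho_S-u_BM)\otimes I^A/d_A$. If a rank-one projective measurement $\{P^S_j\}$ on $S$ left $\rho_{SA}$ invariant, then, $\tau^A$ and $I^A/d_A$ being linearly independent, the associated dephasing would have to fix $M$ and $\rho_S$ separately; hence $\{P^S_j\}$ simultaneously diagonalizes $M$ and $\rho_S$, so $[M,\rho_S]=0$, i.e.\ $Q_Be^{rH_S}\tilde\rho_S=\tilde\rho_S e^{rH_S}Q_B$. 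Taking the $(k,l)$ matrix element and using $l\in B$, $k\notin B$, the left-hand side is $0$ while the right-hand side is $u_k\bra{u_k}e^{rH_S}\ket{u_l}\neq0$ (since $u_k>0$ and $\bra{u_l}e^{rH_S}\ket{u_k}\neq0$) — a contradiction. Hence $\rho_{SA}\notin\mathcal C_S$, which establishes the contrapositive.

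The step I expect to be the real obstacle is this last one — verifying that the constructed state genuinely lies outside $\mathcal C_S$: one must exclude \emph{all} rank-one projective measurements on $S$, not merely the obvious candidates, and the clean route is the criterion ``$\rho_{SA}\in\mathcal C_S$ iff some rank-one projective measurement on $S$ leaves it invariant'' combined with Lemma~\ref{lemma3}-type reasoning to turn the single off-diagonal entry $\bra{u_l}e^{rH_S}\ket{u_k}\neq0$ into the commutator obstruction. Two further technical points deserve care: that the ordering $C^A_k\geq C^A_{k+1}$ survives the perturbation for every degeneracy pattern of $\tilde\rho_S$ (the block-coincidence remark is what keeps this routine), and the null space of $\rho_S$ — if $\rho_S$ is not of full rank and $\ker\rho_S$ is not $H_S$-invariant, the block with $u_B=0$ carries $C^A_B=0$ and the above perturbation becomes invisible, so the ``only if'' direction should be read for $\rho_S$ of full rank (more precisely, for $\rho_S$ whose kernel, if any, is $H_S$-invariant).
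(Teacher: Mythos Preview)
Your forward direction coincides with the paper's: both observe that under $r=0$ or $[\rho_S,H_S]=0$ the vectors $e^{rH_S/2}\ket{u_k}$ are (up to scalars) orthonormal, so the state in Eq.~\eqref{eq.statetheo} is manifestly classical-quantum. Your block-coincidence remark (that $C^A_k=C^A_{k+1}$ whenever $u_k=u_{k+1}$, because a positive semidefinite operator of zero trace vanishes) is a detail the paper leaves implicit but which is indeed needed to make the degeneracy bookkeeping clean.

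For the converse the two arguments genuinely diverge. The paper argues in one line that if \emph{every} state of the form Eq.~\eqref{eq.statetheo} is classical-quantum then $e^{rH_S}$ must be diagonal in the basis $\{\ket{u_k}\}$, whence $[\rho_S,e^{rH_S}]=0$ and so $r=0$ or $[\rho_S,H_S]=0$. You instead work contrapositively and \emph{construct} a concrete zero-gain extension $\rho_{SA}=\rho_S\otimes I^A/d_A+u_B\,e^{rH_S/2}Q_Be^{rH_S/2}\otimes(\tau^A-I^A/d_A)$, then show it cannot be left invariant by any rank-one dephasing on $S$ because that would force $[\rho_S,e^{rH_S/2}Q_Be^{rH_S/2}]=0$, contradicting the chosen nonzero off-diagonal element $\bra{u_l}e^{rH_S}\ket{u_k}$. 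Your route is longer but more explicit, and it exposes a point the paper glosses over: the ``only if'' direction tacitly needs $u_l>0$, i.e.\ the zero-gain block you perturb must lie in the support of $\tilde\rho_S$. As you note, for a pure $\rho_S$ with $[\rho_S,H_S]\neq0$ every zero-gain extension is a product state and hence in $\mathcal C_S$, so without a full-rank (or $H_S$-invariant-kernel) hypothesis the converse can fail. This caveat is a genuine refinement your argument brings out; the paper's two-line proof does not see it.
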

\begin{proof}
For $r=0$ or if $[\rho_S,H_S]=0$, the states $e^{rH_S/2}\ket{u_k}$ are mutually orthogonal so that $\rho_{SA}$ in Eq.~\eqref{eq.statetheo} is a classical-quantum state, $\rho_{SA}\in \mathcal C_S$. Vice versa, if $\rho_{SA}$ in Eq.~\eqref{eq.statetheo} is a classical-quantum state, $e^{rH_S}$ is diagonal with respect to the basis $\{\ket{u_k}\}$, from which we get $r=0$ or $[\rho_S,H_S]=0$.
\end{proof}
In particular, for $r=0$ we get the ergotropy gain $\delta \mathcal U(\rho_{SA}) = \delta \mathcal E(\rho_{SA})$ in Eq.~\eqref{eq.gainergo}. This shows how the presence of initial quantum coherence with respect to the energy basis, so that $[\rho_S,H_S]\neq 0$, can make $\delta \mathcal U(\rho_{SA})=0$ for separable states $\rho_{SA}\notin \mathcal C_S$ with $\delta \mathcal E(\rho_{SA})>0$.
In order to illustrate this result with a simple example, we consider $d_S=2$, thus the state and the Hamiltonian
\begin{equation}\label{eq.rhoeHdS2}
\rho_S = \left(
           \begin{array}{cc}
             p & c \\
             c^* & 1-p \\
           \end{array}
         \right)\,,\quad
H_S = \left(
           \begin{array}{cc}
             \epsilon_1 & 0 \\
             0 & \epsilon_2 \\
           \end{array}
         \right)\,,
\end{equation}
where $\abs{c}\leq \sqrt{p(1-p)}$ so that $\rho_S\geq 0$.
For simplicity we consider $p=e^{r \epsilon_1}/Z_r$ and $c >0$, where $Z_r=e^{r \epsilon_1}+e^{r \epsilon_2}$. In this case, it is easy to see that $\rho_S$ can be obtained as $\rho_S= \ParTr{A}{\rho_{SA}}$ with
\begin{equation}\label{eq.exampleexpo}
\rho_{SA} = \sum_{\sigma=\pm}e^{r H_S/2} \ket{\sigma}\bra{\sigma}e^{r H_S/2} \otimes C^A_\sigma\,,
\end{equation}
where we have defined $\ket{\pm }=(\ket{\epsilon_1}\pm\ket{\epsilon_2})/\sqrt{2}$, and the operators $C^A_\pm$ are diagonal in the same basis $\{\ket{a}\}$, showing diagonal elements $C^{aa}_\pm$ such that $C^{00}_+=C^{00}_-\geq0$ and $C^{11}_+\geq C^{11}_-=0$. The equation $\rho_S= \ParTr{A}{\rho_{SA}}$ is satisfied if  $C^{00}_+ =1/Z_r-C^{11}_+/2$ and $C^{11}_+ = 2 c e^{-r(\epsilon_1+\epsilon_2)/2}$. By performing measurements with projectors $\{\Pi^A_a=\ket{a}\bra{a}\}$, we get $\rho_{S|0}=e^{r H_S}/Z_r$ and $\rho_{S|1}=2e^{r H_S/2} \ket{+}\bra{+}e^{r H_S/2}/Z_r$ with probabilities $p_0=1-p_1$ and $p_1=2c\cosh(r(\epsilon_2-\epsilon_1)/2)$.
For an absolute risk aversion $r_A=r$, the optimal expected utility $\mathcal U (\rho_S)$ is given by Eq.~\eqref{eq.exputiexpo}. For $r_A=r$, we get $u_{1,2}=1/Z_r\pm c e^{-r(\epsilon_1+\epsilon_2)/2}$, from which we get $\mathcal U(\rho_S)= \frac{2c}{r}\sinh(r(\epsilon_2-\epsilon_1)/2)$. Furthermore, we get $\mathcal U(\rho_{S|0})=0$ and $\mathcal U (\rho_{S|1})=\frac{1}{r}\tanh(r(\epsilon_2-\epsilon_1)/2)$, from which $\mathcal U_{\{\Pi^A_a\}}(\rho_{SA})=\sum_a p_a \mathcal U (\rho_{S|a}) = \mathcal U(\rho_S)$, and there is no gain in perfect agreement with Theorem~\ref{theorem}, since the separable state in Eq.~\eqref{eq.exampleexpo} is of the form of Eq.~\eqref{eq.statetheo}. Let us focus on a constant $r_A\neq r$, where $r$ is the parameter of the state in Eq.~\eqref{eq.exampleexpo}. If there is no initial quantum coherence, then $[\rho_S,H_S]=0$ and $c=0$, from which $p_1=0$. In this case we get $\rho_S = \rho_{S|0}$, and thus there is no gain although $r_A\neq r$. In particular, if $c=0$, then Eq.~\eqref{eq.exampleexpo} gives the product state $\rho_{SA}\propto\rho_S \otimes C^A_+$, since $C^A_-=C^A_+$, and thus $\delta \mathcal U(\rho_{SA})=0$, because any selective measurement done always gives a conditional state $\rho_{S|a}=\rho_S$.
Since $\rho_{SA}$ is a product state, $\rho_{SA}\in \mathcal C_S$ and the result is in perfect agreement with Corollary~\ref{corollary}.
In general, for $r_A=r'$, we get $\mathcal U(\rho_{S|0}) = \frac{4e^{r(\epsilon_2+\epsilon_1)/2}}{r' Z_r} \sinh(r'(\epsilon_2-\epsilon_1)/2)\sinh((r-r')(\epsilon_2-\epsilon_1)/2)$ if $r'<r$, $\mathcal U(\rho_{S|0}) = 0$ otherwise, and $\mathcal U(\rho_{S|1})= \frac{e^{r \epsilon_2}-e^{(r-r')\epsilon_2+r'\epsilon_1}}{r' Z_r}$, thus by calculating the eigenvalues $u_k$ of $e^{-r' H_S/2} \rho_S e^{-r' H_S/2}$, we get $\mathcal U(\rho_S)$ as Eq.~\eqref{eq.exputiopti}, and it results that $\mathcal U_{\{\Pi^A_a\}}(\rho_{SA})> \mathcal U(\rho_S)$ for $c>0$ and $r\neq r'$.
In detail, the daemonic protocol can be realized with $U_{S|0}=\ket{\epsilon_1}\bra{\epsilon_2}+\ket{\epsilon_2}\bra{\epsilon_1}$ if $r'<r$, $U_{S|0}=I^S$ otherwise, and $U_{S|1}$ defined such that $U_{S|1} \tilde\rho_{S|1}U^\dagger_{S|1}\propto\ket{\epsilon_1}\bra{\epsilon_1}$.
This illustrates how, while $\delta \mathcal U(\rho_{SA})=0$ when $r_A = r$, the presence of initial quantum coherence ($c\neq0$) can give a nonzero gain $\delta \mathcal U(\rho_{SA})>0$ when $r_A\neq r$. Then, for the state in Eq.~\eqref{eq.exampleexpo}, we have $\delta \mathcal U(\rho_{SA})=0$ for an absolute risk aversion $r_A=r$, but for $c>0$ we  have $\delta \mathcal U(\rho_{SA})>0$ for a constant absolute risk aversion $r_A\neq r$, e.g., for $r_A=0$, $\delta \mathcal E(\rho_{SA})>0$ if $r\neq 0$. In particular, $\delta \mathcal E(\rho_{SA})>0$ for the state in Eq.~\eqref{eq.exampleexpo} since $\rho_{SA}\notin \mathcal C_S$.
In general, for $d_S=d_A=2$, as shown in Ref.~\cite{Francica17}, by considering $\delta \mathcal E =\delta \mathcal E (\rho_{SA})$ in unit of $\epsilon_2-\epsilon_1$, we get $D^{A|S}\leq h(1-\delta \mathcal E/2)$, where the function $h(x)$ reads $h(x)=-x \log_2 x-(1-x)\log_2(1-x)$ if $D^{A|S}=D^{A|S}(\rho_{SA})$ is the quantum discord defined in Ref.~\cite{Ollivier01}.
Thus, the presence of quantum correlations gives a nonzero lower bound for the gain, so that $\delta \mathcal E\geq 2 - 2 h^{-1}(D^{A|S})$. Similarly, for $r_A=r\neq 0$, for randomly generated states $\rho_{SA}$, we will still get a lower bound for $\delta \mathcal U=\delta \mathcal U(\rho_{SA})$, so that $\delta \mathcal U\geq 2 - 2 h_r^{-1}(D^{A|S})$ for a certain function $h_r(x)$, since the set of the states with $D^{A|S}\neq 0$ and $\delta \mathcal U=0$ has measure zero.

\subsection{Non-constant absolute risk aversion}
In general, the agent can be characterized with a non-exponential utility function, so that the absolute risk aversion is not constant.
To study the case of an arbitrary utility function $u(w)$, we focus on $d_S=2$. We consider the state and the Hamiltonian in Eq.~\eqref{eq.rhoeHdS2}, and the unitary operator
\begin{equation}
U_S = \left(
      \begin{array}{cc}
        \alpha & -\beta^* \\
        \beta & \alpha^* \\
      \end{array}
    \right)\,,
\end{equation}
with $\abs{\alpha}^2+\abs{\beta}^2=1$. Since the utility is defined up to an additive constant, without loss of generality we focus on a function $u(w)$ such that $u(0)=0$. For $q=1/2$, from Eq.~\eqref{eq.aveu} we get the general expression for the expected utility
\begin{equation}\label{eq.ud2}
\langle u(w) \rangle = \abs{\beta}^2 X -p \abs{\beta}^2 Y - 2 \text{Re}(c\alpha \beta) Z\,,
\end{equation}
where we have defined $X=u(\epsilon_2-\epsilon_1)$, $Y=2 u_o(\epsilon_2-\epsilon_1)$ and $Z=2u_o(\frac{\epsilon_2-\epsilon_1}{2})$, where $u_o(w)=(u(w)-u(-w))/2$ is the odd part of the utility function $u(w)$. We note that arbitrary $X$, $Y$ and $Z$ can be obtained from a cubic utility, for instance for $\epsilon_2-\epsilon_1=1$, we can consider
\begin{equation}
u(w) = \frac{8Z-Y}{6} w + \frac{2X-Y}{2}w^2 + \frac{2(Y-2Z)}{3} w^3\,.
\end{equation}
In detail, $Y\geq X>0$ and $Z>0$ if $u(w)$ is a  strictly increasing function.
To obtain the optimal expected utility $\mathcal U (\rho_S)$, we must to maximize Eq.~\eqref{eq.ud2} over all the complex $\alpha=a e^{i\theta}$ and $\beta=b e^{i\phi}$ such that $a^2+b^2=1$.
We search the stationary point of the Lagrangian $L=\langle u(w) \rangle-\lambda(a^2+b^2)$, where we have introduced the Lagrange multiplier $\lambda$ such that $a^2+b^2=1$. For $c$ real, we get
\begin{eqnarray}
&&\cos(\theta+\phi) = \pm 1\,,\\
&&a= \mp \frac{cZb}{\lambda}\,,\\
&&b= \frac{1}{\sqrt{1+\frac{c^2Z^2}{\lambda^2}}}\,,\\
\label{eq.lambda}&&\lambda^2 - (X-pY)\lambda - c^2Z^2=0\,,
\end{eqnarray}
where the sign $\cos(\theta+\phi)$ is chosen such that $a\geq 0$, and
\begin{equation}\label{eq.utid2}
\mathcal U (\rho_S) = \frac{X-pY+\frac{2c^2Z^2}{\lambda}}{1+\frac{c^2Z^2}{\lambda^2}}=\frac{\lambda}{1+\frac{c^2Z^2}{\lambda^2}}\,.
\end{equation}
Thus we deduce that $\lambda\geq 0$ is the largest solution of Eq.~\eqref{eq.lambda}, which explicitly reads
\begin{equation}
\lambda= \frac{X-pY}{2}+ \sqrt{\left(\frac{X-pY}{2}\right)^2+c^2Z^2}\,,
\end{equation}
so that $\mathcal U(\rho_S)\geq 0$.
We note that the expression in Eq.~\eqref{eq.utid2} can be easily generalized for a complex coherence $c$ by replacing $c^2$ with $|c|^2$.
As $|c|Z\to 0$, we get the expansion
\begin{equation}\label{eq.expaZ}
\mathcal U (\rho_S) = \frac{1}{2}(X-p Y + |X-p Y|) + \frac{|c|^4 Z^4}{|X-pY|^3} + O(|c|^6Z^6)\,,
\end{equation}
so that, for an incoherent state, i.e., for $c=0$, or for a utility function such that $Z=0$, we get $\mathcal U (\rho_S)=0$ if $X-p Y\leq 0$, otherwise $\mathcal U (\rho_S)=X-p Y>0$.
We aim to characterize the states $\rho_{SA}$ such that $\delta \mathcal U(\rho_{SA})=0$. Although for an exponential utility function they are separable, for an arbitrary utility in principle they can be not. In particular, we focus on $d_A=2$ and we start to consider the states
\begin{equation}\label{eq.Xstate}
\rho_{SA}=p \ket{00}\bra{00}+(1-p)\ket{11}\bra{11}+\frac{C}{2}(\ket{00}\bra{11}+\ket{11}\bra{00})\,,
\end{equation}
where $C$, such that $0\leq C\leq 2\sqrt{p(1-p)}$, is the quantum concurrence~\cite{hill97} and quantifies the entanglement between $S$ and $A$, e.g., for $C>0$ the state $\rho_{SA}$ is entangled. By randomly generating the states $\rho_{SA}$ in Eq.~\eqref{eq.Xstate} and the values of $X$,$Y$ and $Z$ such that $X-p Y\leq 0$, we find that there are entangled states $\rho_{SA}$ showing a practically zero gain $\delta \mathcal U(\rho_{SA})\approx 0$ (see Fig.~\ref{fig:plot}).
\begin{figure}
[t!]
\centering
\includegraphics[width=0.85\columnwidth]{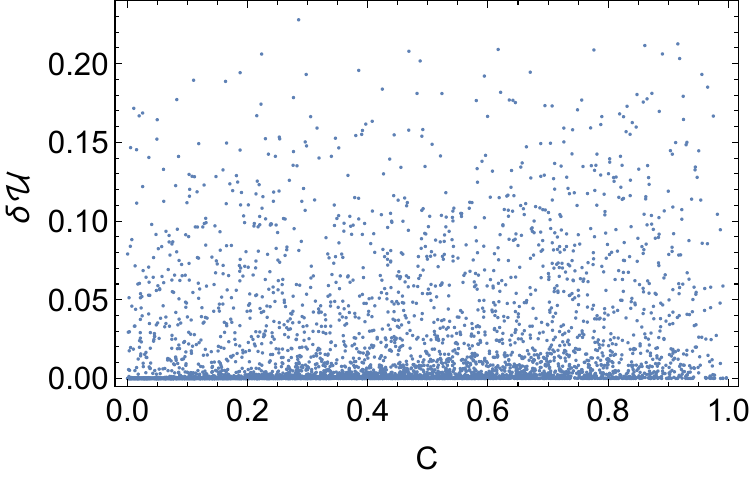}
\caption{ The gain $\delta \mathcal U(\rho_{SA})$ versus the concurrence $C$ for $10^4$ random states $\rho_{SA}$ of Eq.~\eqref{eq.Xstate}. We randomly generate $X$, $Y$ and $Z$ in the interval $[-1,1]$, such that $X-p Y\leq 0$.
}
\label{fig:plot}
\end{figure}
This suggests that there are zero-gain entangled states depending on the utility function. However, it results that all these zero-gain states are obtained when $X-q Y\leq 0$ for any $q\in [0,1]$, as $Z\to 0$, since in this case $\mathcal U(\rho_{S|a})\sim Z^4 \to 0$ for any state (see Eq.~\eqref{eq.expaZ}).
Thus, $\delta\mathcal U(\rho_{SA})=0$ for all the states $\rho_{SA}$ when $Z=0$ if $X\leq 0$ and $Y\geq 0$ or if $X\leq Y<0$. In particular, in this case the utility function is not a strictly increasing function.
Thus, for instance we have zero-gain for $X<0$ and $Y=Z=0$, i.e., for the quadratic utility $u(w)=-w^2$, or for $Y>0$ and $X=Z=0$, i.e., for $u(w) = -w -3 w^2 + 4 w^3$.
Furthermore, we note that $\delta\mathcal U(\rho_{SA})=0$ for all the states $\rho_{SA}$ when $Z=0$ if $X>Y\geq 0$ or if $X>0$ and $Y<0$, since in this case $X-q Y> 0$ for any $q\in [0,1]$ and by using Eq.~\eqref{eq.expaZ} we get $\mathcal U_{\{\Pi^A_a\}}(\rho_{SA}) = \mathcal U(\rho_S) =X-pY$ for any set $\{\Pi^A_a\}$.
We note that for utility functions $u(w)$ such that $Z=0$, the optimal expected utility is equal to
\begin{equation}\label{eq.incouti}
\mathcal U (\rho_S) = \mathcal U (\Delta(\rho_S))
\end{equation}
for any state $\rho_S$ and for the dephasing map with energy projectors $\Delta(\rho_S) =\sum_k \ket{\epsilon_k}\bra{\epsilon_k}\rho_S \ket{\epsilon_k}\bra{\epsilon_k}$. Thus, the initial quantum coherence does not give any contribution for these utility functions, which we call `incoherent' utility functions. In detail, we define the incoherent utility functions as the utility functions $u(w)$ such that $\mathcal U_c(\rho_S)=0$ for all $\rho_S$, where in general the coherent contribution is defined as $\mathcal U_c(\rho_S)= \mathcal U (\rho_S) - \mathcal U (\Delta(\rho_S)) $~\cite{Francica24}. For $d_S=2$, the incoherent utility functions are the functions $u(w)$ such that $Z=0$.
For an arbitrary incoherent utility function, the gain can be expressed as
\begin{equation}\label{eq.gainZ0}
\delta\mathcal U(\rho_{SA}) = \max_{\{\Pi^A_a\}}\sum_a p_a \mathcal U(\Delta(\rho_{S|a})) - \mathcal U(\Delta(\rho_S))
\end{equation}
and, for instance, it is zero for $d_S=2$ for any state if $X\leq 0$ and $Y\geq 0$, if $X\leq Y<0$, if $X>Y\geq 0$ or if $X>0$ and $Y<0$. For a Werner state $\rho_{SA}=\frac{1-z}{4}I + z \ket{\psi}\bra{\psi}$, where $\ket{\psi}=(\ket{00}+\ket{11})/\sqrt{2}$, which is entangled for $z>1/3$, given an incoherent utility function, from Eq.~\eqref{eq.gainZ0} we get
\begin{equation}\label{eq.werneruti}
\delta \mathcal U(\rho_{SA})=\max_{q\in[0,1]} \frac{1}{4}\abs{\tilde X+ \tilde Y_q}+ \frac{1}{4}\abs{\tilde X-\tilde Y_q}- \frac{1}{2}\abs{\tilde X}\,,
\end{equation}
where $\tilde X=X-Y/2$ and $\tilde Y_q=\left(q-\frac{1}{2}\right)z Y$. If $2X>Y\geq0$, from Eq.~\eqref{eq.werneruti} we get $\delta\mathcal U(\rho_{SA})=0$ for $z\leq z_0 = 2 X/Y-1$, whereas  $\delta\mathcal U(\rho_{SA})>0$  for $z>z_0$. Thus, for $z_0>1/3$ there are Werner states with zero-gain (for $z\leq z_0$) although they are entangled states, and there are also entangled states with nonzero-gain, e.g., Werner states with $z>z_0$. The results can be summarized as in Table~\ref{table:1}.
\begin{table}
[t!]
\caption{The role of quantum correlations can be summarized as follows.}
\vspace{0.2cm}
\begin{tabular}{|c|c|}
\hline
risk aversion & utility gain\\
\hline
\hline
$r_A=0$ &$\delta \mathcal U =\delta \mathcal E =0 \Rightarrow \rho_{SA}\in \mathcal C_S$  \\
\hline
$r_A=\text{const}\neq 0$ & $\delta \mathcal U =0 \Rightarrow\rho_{SA}\in \mathcal S $   \\
\hline
$r_A\neq \text{const}$& $\delta \mathcal U =0 \nRightarrow \rho_{SA}\in \mathcal S $\\
\hline
\end{tabular}
\label{table:1}
\end{table}
In contrast, in the case of arbitrary non-incoherent utility functions, if Eq.~\eqref{eq.gainZ0} is zero, the gain is given by the coherent contribution and reads $\delta\mathcal U(\rho_{SA}) = \max_{\{\Pi^A_a\}}\sum_a p_a \mathcal U_c(\rho_{S|a}) - \mathcal U_c(\rho_S)$. In this case, we find that the gain $\delta \mathcal U(\rho_{SA})$ is positive if $\rho_{SA}$ is entangled at least for $d_S=2$.
Furthermore, we note that the condition $\Delta(\rho_{S|a})=\rho_{S|a}$ for all $a$ and any set $\{\Pi^A_a\}$, from which $\rho_S=\Delta(\rho_S)$ and Eq.~\eqref{eq.gainZ0} follows, is not a sufficient condition in order to can obtain $\delta\mathcal U(\rho_{SA})=0$ for $\rho_{SA}\notin \mathcal S$.
To prove it, it is enough to note that if this condition is satisfied, by using Lemma~\ref{lemma2}, we get $\rho_{SA}\in \mathcal C_S$ having the form $\rho_{SA} = \sum_k \ket{\epsilon_k}\bra{\epsilon_k}\otimes C^A_k$. Then, in this case there are no states $\rho_{SA}\notin \mathcal S$ such that $\delta \mathcal U(\rho_{SA})=0$, for any utility function $u(w)$. This suggests that only for incoherent utility functions, which we recall to be defined such that Eq.~\eqref{eq.incouti} is satisfied for any $\rho_S$, there can be states $\rho_{SA}\notin \mathcal S$ such that $\delta \mathcal U(\rho_{SA})=0$.

\subsection{Generalization to arbitrary $q$}
Although we only focused on $q=1/2$, all the results achieved can be easily generalized to arbitrary values of the quasiprobability parameter $q$. For a constant absolute risk aversion $r_A=r$, as noted in Ref.~\cite{Francica24}, given a state $\rho_S$, the optimal value in Eq.~\eqref{eq.exputiexpo} is obtained only for $q=1/2$. Conversely, for $q\neq 1/2$ we get a larger optimal value $\mathcal U(\rho_S)$, which involves a particular affine combination of the permutations of the eigenvalues $u_k$ of the operator $e^{-r H_S/2} \rho_S e^{-rH_S/2}$. To generalize the results to an arbitrary $q$, it is useful to define the tilde map $\rho_S \mapsto \tilde \rho_S$ such that
\begin{equation}\label{eq.tildemap}
\tilde \rho_S = \frac{1}{2} \left( e^{-rq H_S} \rho_S e^{-r(1-q)H_S} + e^{-r(1-q)H_S} \rho_S e^{-rq H_S}\right)\,.
\end{equation}
Then, since
\begin{equation}
\langle u(w)\rangle = \frac{1}{r}\left(1-\Tr{U_S \tilde \rho_S U^\dagger_S e^{rH_S}}\right)\,,
\end{equation}
the optimal expected utility $\mathcal U(\rho_S)$ is still given by Eq.~\eqref{eq.exputiexpo} with new $u_k$ and $\ket{u_k}$ that depend on $q$, which are eigenvalues and eigenvectors of $\tilde \rho_S$, i.e., such that $\tilde \rho_S \ket{u_k}= u_k \ket{u_k}$ and $u_k\geq u_{k+1}$.
In particular, from Ref.~\cite{Francica24}, we deduce that the new $u_k$ (achieved for an arbitrary $q$) can be expressed as an affine combination of the permutations of the $u_k$'s for $q=1/2$, so that the minimum of $\mathcal U(\rho_S)$ over $q$ is obtained at $q=1/2$ (at least in a neighborhood of $q=1/2$).
Thus, since the tilde map defined by Eq.~\eqref{eq.tildemap} is linear, Lemma~\ref{lemma1} still holds with this new $\tilde \rho_{S|a}$ obtained by applying the tilde map to $\rho_{S|a}$. Only Theorem~\ref{theorem} undergoes a slight change in form due to the inverse of the tilde map. By solving the operator equation $a x + x a= y$ with $a\geq 0$ (see, e.g., Ref.~\cite{bhatia97}), we get the inverse
\begin{equation}
\rho_S = 2 \int^\infty_0 e^{-s A +rqH_S} \tilde \rho_S e^{-s A +rqH_S} ds\,,
\end{equation}
where $A = e^{-r(1-2q)H_S}$. Then, for arbitrary $q$ the zero-gain state in Eq.~\eqref{eq.statetheo} reads
\begin{equation}
\rho_{SA} = \sum_{k} \left(2 \int^\infty_0 e^{-s A +rqH_S}\ket{u_k}\bra{u_k}e^{-s A +rqH_S} ds\right) \otimes C^A_k\,,
\end{equation}
and Theorem~\ref{theorem} still holds with this new state. On the other hand, for a non-constant absolute risk aversion and an arbitrary $q$, Eq.~\eqref{eq.ud2} still holds with a new $Z$ that is
\begin{equation}
Z = u_o (q(\epsilon_2-\epsilon_1)) + u_o((1-q)(\epsilon_2-\epsilon_1))\,.
\end{equation}
Thus, all the results can be easily generalized to arbitrary $q$. In particular, for $q=0,1$, we get $Z=Y/2$, then for incoherent utility functions we get $Z=0$ and thus $Y=0$. In this case, the utility function is trivial (being nonzero on the support only for $w=\epsilon_2-\epsilon_1$) and gives $\mathcal U(\rho_S)=(X+|X|)/2$, which does not depend on $\rho_S$, so that $\delta \mathcal U(\rho_{SA})=0$ for any state $\rho_{SA}$.

\section{Conclusions}\label{sec.conclusions}
We considered a bipartite quantum system and took into account an agent that can extract work locally through a daemonic protocol introduced in Ref.~\cite{Francica17}, which is affected by work fluctuations when the agent is non-neutral to risk. We introduced the daemonic expected utility, so that the agent can select the optimal work extraction protocol by looking on the utility gain defined from this quantity. We completely characterized the role of correlations among the two parties of the system, showing how quantum correlations can influence the selection done by the agent depending on the absolute risk aversion.
Furthermore, our results clarify the role of initial quantum coherence with respect to the energy basis, which is tricky. We showed how the presence of an initial quantum coherence can give zero utility gain although the states have positive ergotropy gain. Furthermore, in general, if the contribution of this initial quantum coherence is always absent due to the particular form of the utility function, then there are entangled state with zero utility gain.
In conclusion, we believe our results represent a substantial step forward in understanding work fluctuations, showing how these strongly alter the optimization of the work extraction daemonic protocol (when they are taken in account by a utility function), resulting in a sensible deviation of a risk non-neutral agent's decision-making behavior from a risk neutral one.
We therefore hope that our studies can find some applications in the implementation of work extraction protocols, based on measurement and feedback, which in general can be optimized considering a utility function.

\subsection*{Acknowledgements}
The authors acknowledge financial support from the project BIRD 2021 "Correlations, dynamics and topology in long-range quantum systems" of the Department of Physics and Astronomy, University of Padova, from the European Union - Next Generation EU within the National Center for HPC, Big Data and Quantum Computing (Project No. CN00000013, CN1 - Spoke 10 Quantum Computing) and  from the Project "Frontiere Quantistiche" (Dipartimenti di Eccellenza) of the Italian Ministry for Universities and Research.


\begin{thebibliography}{99}


\bibitem{acin15} M. Perarnau-Llobet, K. V. Hovhannisyan, M. Huber, P. Skrzypczyk, N. Brunner, and A. Ac\'{i}n, Phys. Rev. X 5, 041011 (2015).

\bibitem{banik16} A. Mukherjee, A. Roy, S. S. Bhattacharya, and M. Banik, Phys. Rev. E 93, 052140 (2016).
\bibitem{Alimuddin19} M. Alimuddin, T. Guha, and P. Parashar, Phys. Rev. A 99, 052320 (2019).

\bibitem{Francica17} G. Francica, J. Goold, F. Plastina, and M. Paternostro, npj Quantum Inf. 3, 12 (2017).
\bibitem{mauro19} F. Bernards, M. Kleinmann, O. G\"{u}hne, and M. Paternostro, Entropy 21, 771 (2019).

\bibitem{Touil22} A. Touil, B. \c{C}akmak, and S. Deffner, J. Phys. A: Math. Theor. 55, 025301 (2022).

\bibitem{Francicacorr22} G. Francica, Phys. Rev. E 105, L052101 (2022).


\bibitem{Puliyil22} S. Puliyil, M. Banik, and M. Alimuddin, Phys. Rev. Lett. 129, 070601 (2022).
\bibitem{Shi22} H.-L. Shi, S. Ding, Q.-K. Wan, X.-H. Wang, and W.-L. Yang, Phys. Rev. Lett. 129, 130602 (2022).
\bibitem{Imai23} S. Imai, O. G\"{u}hne, and S. Nimmrichter, Phys. Rev. A 107, 022215 (2023).


\bibitem{Morrone23} D. Morrone, M. A. C. Rossi, and M. G. Genoni, Phys. Rev. Applied 20, 044073 (2023).

\bibitem{brukner12} O. Oreshkov, F. Costa, and \v{C}. Brukner, Nat. Commun. 3, 1092 (2012).

\bibitem{Francicagames22} G. Francica, Phys. Rev. A 106, 042214 (2022).

\bibitem{Chiribella13} G. Chiribella, G. M. D’Ariano, P. Perinotti, and B. Valiron, Phys. Rev. A 88, 022318 (2013).

\bibitem{kyrylo22} K. Simonov, G. Francica, G. Guarnieri, and M. Paternostro, Phys. Rev. A 105, 032217 (2022).
\bibitem{kyrylo23} K. Simonov, S. Roy, T. Guha, Z. Zimbor\'{a}s, and G. Chiribella, arXiv:2208.04034 (2023).

\bibitem{Cheong23} J. W. Cheong, A. Pradana, and L. Y. Chew, Phys. Rev. A 108, 012201 (2023).

\bibitem{vonNeumann} J. von Neumann, and O. Morgenstern, Theory of Games and Economic Behavior (60th Anniversary Commemorative Edition) (Princeton University Press, 2007).


\bibitem{Ducuara23} A. F. Ducuara, P. Skrzypczyk, F. Buscemi, P. Sidajaya, and V. Scarani, Phys. Rev. Lett. 131, 197103 (2023).
\bibitem{Francicauti23} G. Francica, and L. Dell'Anna, Phys. Rev. E 109, 014112 (2024).


\bibitem{Perarnau-Llobet17} M. Perarnau-Llobet, E. B\"{a}umer, K. V. Hovhannisyan, M. Huber, and A. Acin, Phys. Rev. Lett. 118, 070601 (2017).
\bibitem{Lostaglio18} M. Lostaglio, Phys. Rev. Lett. 120, 040602 (2018).


\bibitem{Francica22} G. Francica, Phys. Rev. E 105, 014101 (2022).
\bibitem{Francica222} G. Francica, Phys. Rev. E 106, 054129 (2022).


\bibitem{Francica24} G. Francica, and L. Dell'Anna, Phys. Rev. E 109, 044119 (2024).




\bibitem{Allahverdyan04} A. E. Allahverdyan, R. Balian, and T. M. Nieuwenhuizen, Europhys. Lett. 67, 565 (2004).

\bibitem{Horodecki09} R. Horodecki, P. Horodecki, M. Horodecki, and K. Horodecki, Rev. Mod. Phys. 81, 865 (2009).
\bibitem{Modi10} K. Modi, T. Paterek, W. Son, V. Vedral, and M. Williamson, Phys. Rev. Lett. 104, 080501 (2010).
\bibitem{Vedral97}  V. Vedral, M. B. Plenio, M. A. Rippin, and P. L. Knight, Phys. Rev. Lett. 78, 2275 (1997).
\bibitem{Henderson01} L. Henderson, and V. Vedral, J. Phys. A 34, 6899 (2001).
\bibitem{Ollivier01} H. Ollivier, and W. H. Zurek, Phys. Rev. Lett. 88, 017901 (2001).

\bibitem{bookmicroeco} D. Mas-Colell, M. Winston, and J. Green, Microeconomic Theory, Oxford, Oxford University Press, 1995.
\bibitem{bookmicroeco2} D. Kreps, A Course in Microeconomic Theory, New Jersey, Princeton University Press, 1990.






\bibitem{Allahverdyan14} A. E. Allahverdyan, Phys. Rev. E 90, 032137 (2014).
\bibitem{Solinas15} P. Solinas, and S. Gasparinetti, Phys. Rev. E 92, 042150 (2015).


\bibitem{alessandrodiscord} A. Ferraro, L. Aolita, D. Cavalcanti, F. M. Cucchietti, and A. Ac\'{i}n, Phys. Rev. A 81, 052318 (2010).

\bibitem{hill97} S. A. Hill, and W. K. Wootters, Phys. Rev. Lett. 78, 5022 (1997).

\bibitem{bhatia97} R. Bhatia, and P. Rosenthal, Bulletin of the London Mathematical Society 29, 1 (1997).




























\end{thebibliography}
\end{document}